\newcommand{\enquote}[1]{``#1''}
\newcommand{\R}[1]{\mathbb{R}^{#1}}
\newcommand{\D}{\mathrm{D}}  
\newcommand{\cX}{\mathcal{X}}
\newtheoremstyle{mytheoremstyle}
{\smallskipamount}
{\smallskipamount}
{\itshape}
{}
{\bfseries}
{.}
{.5em}
{}
\theoremstyle{mytheoremstyle}
\newtheorem{assumption}{Assumption}
\newtheorem{lemma}{Lemma}
\newtheorem{corollary}{Corollary}
\newtheorem{theorem}{Theorem}
\newtheorem{definition}{Definition}
\newenvironment{syseq*}{%
	\color{gray} \left\{ \normalcolor
	\begin{aligned}
	}{%
	\end{aligned} \right. }
\newcommand{\col}{\operatorname{col}}
\newcommand{\vpre}{v_{\text{pre}}}
\newcommand{\Ek}{E_{\text{K}}}
\newcommand{\Ena}{E_{\text{Na}}}
\newcommand{\gl}{\bar{g}_{\text{leak}}}
\newcommand{\cI}{\mathcal I}
\newcommand{\cS}{\mathcal S}
\newcommand{\cV}{\mathcal V}
\newcommand{\cZ}{\mathcal Z}
\DeclareMathOperator{\card}{card}
\definecolor{dark-magenta}{RGB}{176, 70, 161}
\title{\LARGE \bf On the Contraction of Excitable Systems}
\author{Alessandro Cecconi$^{1,2}$, Michelangelo Bin$^{1}$, Lorenzo Marconi$^{1}$, Rodolphe Sepulchre$^{2,3}$%
	\thanks{$^{1}$ Alessandro Cecconi, Michelangelo Bin, and Lorenzo Marconi are with the Department of Electrical, Electronic and Information Engineering (DEI), University of Bologna, Bologna, Italy.}
	\thanks{$^{2}$ Alessandro Cecconi and Rodolphe Sepulchre are with STADIUS, Department of Electrical Engineering (ESAT), KU Leuven, Leuven, Belgium.}
	\thanks{$^{3}$ Rodolphe Sepulchre is with the Department of Engineering, University of Cambridge, Cambridge, United Kingdom.}%
}
\begin{document}
	
	\maketitle
	\thispagestyle{empty}
	\pagestyle{empty}
	
\begin{abstract}
	We study the contraction  of Hodgkin-Huxley model and its role in the reliability of spike timings. Without input, the model is contractive in the region of physiological interest. With impulsive synaptic inputs, contraction is retained provided that the input events are sparse enough. Contraction is lost when the input firing rate is too high. Spike timings are shown to be reliable in the contracting regime.
\end{abstract}

\section{Introduction}
Excitable systems, exemplified by conductance-based models, mix the continuous and the discrete. Their trajectories follow differential equations but consist of sequences of discrete events \cite{RS2022Spiking}. Building on this mixed description, neuromorphic control exploits excitable internal models to achieve event regulation, rather than trajectory regulation. Internal event generators can synchronize with external events despite a persistent mismatch between the corresponding continuous-time trajectories \cite{cecconi2025NOLCOS, sepulchre2025regulation}. In neuroscience, reliability refers to the trial-to-trial reproducibility of spike times under the same input stimulation, with output spike voltages that lock to input synaptic currents despite variability in initial conditions and other perturbations \cite{mainen_reliability_1995, brette_reliability_2003, lin_stimulus-response_2013}. The objective of this work is to show that an excitable behavior is contractive when regarded as an event generator, meaning that the input current keeps the state variables near an exponentially stable equilibrium except for brief spikes, that is, short and sharp excursions separated by refractory periods. In this setting, the continuous contraction property of the flow explains the discrete reliability of the observed events.

Contraction theory formalizes robustness as incremental exponential convergence of trajectories driven by the same input  \cite{lohmiller_contraction_1998, pavlov_uniform_2006, bullo_contraction_2024}. In excitable systems contraction is input dependent  \cite{pogromsky2013input, Lee2022, Bin2025reliability}. Constant or effectively constant inputs can generate limit-cycle oscillations that are not contractive and that are sensitive to phase perturbations \cite{izhikevich_dynamical_2006, Ermentrout_Terman_2010, gerstner2014neuronal}. In contrast, subthreshold inputs, or event inputs that trigger sparse spikes, result in contracting behavior with trajectories that contract to each other, and with spike times that are reliable. This input-selective viewpoint clarifies how the same neuron can appear phase-sensitive in a tonic regime yet display highly reproducible timing when driven by sparse triggering signals.

We formalize this viewpoint on a standard conductance-based neuron model driven by a first-order synapse. Physiological bounds on voltage and gating variables define a compact forward-invariant set on which the unforced Hodgkin–Huxley flow is incrementally exponentially stable. To connect the continuous flow with discrete observations we pair the model with an event readout that extracts spike timings from a voltage trajectory. When the flow is contractive, different initial conditions under the same admissible input lead to trajectories that converge and to event sequences that coincide after a short transient.

We then study impulsive synaptic drives and identify conditions under which contraction, and therefore event-level reliability, persists. For arbitrary spike trains, an average dwell-time condition ensures impulses are sufficiently spaced so that the state returns near equilibrium between events. For periodic spike trains, this requirement becomes a lower bound on the inter-impulse period that depends on the contraction margin of the unforced flow. The opposite regime reveals the limits of reliability. At high input rates the synaptic state saturates, the effective conductance becomes nearly constant, and the neuron may enter self-sustained periodic firing on a limit cycle, which destroys global contractivity and makes timing sensitive to phase.

Numerical experiments on a Hodgkin–Huxley neuron illustrate both behaviors predicted by the analysis. With adequately spaced inputs, trajectories contract and spike times lock tightly after a brief transient. With high-rate input events, tonic oscillations exhibit persistent phase offsets across trials and timing variability does not shrink. Together these findings provide a simple and testable account of when excitable systems are contractive at the level of events and why their spike timings are reliable.

	\section{Modeling Excitable Systems}
	We consider excitable systems modeled by the following \emph{conductance-based model}
	\begin{equation}\label{eq:exc_model}
		\begin{aligned}
			\tau_j(v)\dot{x}_j &= -x_j + \mu_j(v), \qquad j=1,\dots,m,\\
			C\dot{v} &= -\bar g_{\mathrm{leak}}(v-E_{\mathrm{leak}})
			-\sum_{k=1}^n i_k(x,v) - i_s,
		\end{aligned}
	\end{equation}
	where $v\in\mathbb R$ is the membrane voltage, and $x_j\in\mathbb R$ are gating variables.
	The functions $\mu_j:\mathbb R\to(0,1)$ and $\tau_j:\mathbb R\to\mathbb R_+$ are the steady-state activation and the voltage-dependent time constant, respectively. These functions typically have a sigmoidal shape \cite[Sec.~1.8]{Ermentrout_Terman_2010}.
	
	The quantities $i_k(x, v)$ represent internal ionic currents. They have an Ohmic form
	\[
	i_k(x,v)=\bar g_k\,\varphi_k(x)(v-E_k),\qquad k=1,\dots,n,
	\]
	where $\bar g_k>0$ is the maximal conductance, $E_k\in\mathbb R$ the corresponding reversal potential, and $\varphi_k:\mathbb R^m\to(0,1)$ is an activation function determining the fraction of open channels. 
	
The variable $i_s$ denotes an external synaptic current, which is modeled by
\begin{align}\label{eq.synapse}
	\dot s = - \alpha s + \beta (1-s)\vpre(t), \quad
	i_s(t) = \bar g_s s(v - E_s),
\end{align}
where $\alpha > 0$ is a relaxation constant, $\beta \in(0,1]$ is the event-activation gain, $\bar g_s>0$ is the maximal synaptic conductance, $E_s\in\R{}$ is the synaptic reversal potential, and $\vpre(t)$ is the pre-synaptic voltage, the model input. The synapse is a one-port element of the same type as the internal currents.

We model presynaptic spikes as a locally finite set of event times. Let
$\mathbb U$ denote the collection of all locally finite subsets of
$\mathbb R_{\ge 0}$. Each $\mathcal I\in\mathbb U$ can be uniquely
represented by a strictly increasing sequence
$
\mathcal I=\{t_\ell\}_{\ell=1}^{N},
$
where $N\in\mathbb N$. To every $\mathcal I$ we associate
the spike-train signal
\begin{equation}\label{eq.disc.v_pre.dirac}
	\vpre(t)\coloneq \sum_{t_\ell \in \mathcal I} \delta(t-t_\ell),
\end{equation}
where $\delta$ denotes the Dirac distribution. In the particular case of
$t_\ell=\ell T$, the spike train is $T$-periodic. This representation is
standard in neuroscience and neuromorphic
engineering~\cite{gerstner2014neuronal,Medvedeva2025}.

The Hodgkin--Huxley model~\cite{Hodgkin1948} is recovered as a
particular case with $m=3$ gating variables modulating $n=2$ ionic
currents,
\[
i_{\mathrm{Na}}=\bar g_{\mathrm{Na}}\,x_1^{3}x_2(v-E_{\mathrm{Na}}),
\qquad
i_{\mathrm K}=\bar g_{\mathrm K}\,x_3^{4}(v-E_{\mathrm K}),
\]
where $i_{\mathrm{Na}}$ and $i_{\mathrm K}$ denote the sodium and
potassium currents, respectively. The corresponding equivalent circuit
is shown in Figure~\ref{fig.circuit}.
	
	\begin{figure}
		\centering
		\includegraphics[width=0.475\textwidth]{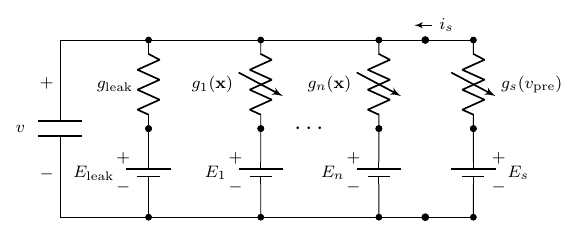}
		\caption{Circuit schematic of the conductance-based model \eqref{eq:exc_model} with the external synapse \eqref{eq.synapse}.}
		\label{fig.circuit}
	\end{figure}
	
	In physiological conditions, neurons have a well-defined resting state, which corresponds to a unique stable equilibrium. Furthermore, the membrane voltage remains confined in an interval determined by the minimum and maximum of the reverse potentials, that is, in the case of the Hodgkin-Huxley, the Nernst potential of potassium and the Nernst potential of sodium, respectively \cite{izhikevich_dynamical_2006}. We link such conditions to our model \eqref{eq:exc_model}-\eqref{eq.synapse}.
	With $\vpre\equiv 0$, the unforced voltage dynamics in \eqref{eq:exc_model} can be rewritten as
	\begin{equation}
		C\dot v=-G(x,s)\big(v-E(x,s)\big),
	\end{equation}
	where
	\begin{equation}
		G(x,s)\coloneq \bar g_{\mathrm{leak}}+\sum_{k=1}^n \bar g_k \varphi_k(x)+ \bar g_s s \ge \gl > 0,
	\end{equation}
	and
	\begin{equation}
		E( x,s)\coloneq
		\frac{\bar g_{\mathrm{leak}}E_{\mathrm{leak}}+\sum_k \bar g_k \varphi_k(x) E_k + \bar g_s s\,E_s}
		{\bar g_{\mathrm{leak}}+\sum_k \bar g_k \varphi_k( x)+ \bar g_s s}.
	\end{equation}
	Hence, $E({x}, s)$ is a convex combination of reversal potentials. In particular, let $E_\text{min} \coloneq \min\{E_{\mathrm{leak}},E_k,E_s\}$, and $E_{\text{max}} \coloneq \max\{E_{\mathrm{leak}},E_k,E_s\}$. Then, for every $(x, v)$, it holds
	\[
	E( x,s)\in [E_{\text{min}}, E_{\text{max}}].
	\]
	For instance, for the Hodgkin–Huxley model one has $E({x}, s) \in [\Ek, \Ena]$, provided that $E_\text{s} \in (\Ek, \Ena)$. More in general, one can show the following.
	\begin{lemma}
		With $\cS=[0,1]$, $\cX=[0,1]^m$, $\cV=[E_{\min},E_{\max}]$, define $\cZ\coloneq \cS\times\cX\times\cV$. Then, $\cZ$ is compact and forward invariant for \eqref{eq:exc_model}–\eqref{eq.synapse} under \eqref{eq.disc.v_pre.dirac}.
		\label{lemma.fi}
	\end{lemma}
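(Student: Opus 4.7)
\textbf{Proof plan for Lemma \ref{lemma.fi}.}
Compactness is immediate: $\cZ$ is a finite Cartesian product of closed bounded intervals in $\mathbb{R}^{m+2}$. For forward invariance, the plan is to interpret \eqref{eq:exc_model}--\eqref{eq.synapse} driven by \eqref{eq.disc.v_pre.dirac} as an impulsive system, alternating a smooth flow on each inter-impulse interval $(t_\ell,t_{\ell+1})$ with a reset at each $t_\ell$, and then apply a Nagumo-type tangency argument face by face.

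On inter-impulse intervals, $\vpre \equiv 0$, so I would verify invariance of each face of $\cZ$ by inspecting the vector field. For every gate $x_j$: at $x_j=0$, $\tau_j(v)\dot x_j=\mu_j(v)>0$; at $x_j=1$, $\tau_j(v)\dot x_j=\mu_j(v)-1<0$, using $\mu_j(v)\in(0,1)$. Hence $[0,1]$ is invariant for each gate. For $s$: between impulses $\dot s=-s/\tau_s$, which keeps $s$ in $[0,1]$ (in fact drives it monotonically to $0$). For $v$: using the rewriting $C\dot v=-G(x,s)(v-E(x,s))$ with $G(x,s)\ge \gl>0$ and $E(x,s)\in[E_{\min},E_{\max}]$, at $v=E_{\max}$ the bracket is nonnegative so $\dot v\le 0$, and at $v=E_{\min}$ it is nonpositive so $\dot v\ge 0$. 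Standard Nagumo arguments then give invariance of $\cZ$ under the continuous flow.

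The delicate step, and the expected main obstacle, is the rigorous handling of the impulsive term $\alpha(1-s)\vpre(t)$ with $\vpre$ a sum of Diracs multiplying a state-dependent gain. My approach is to read \eqref{eq.synapse} in the standard impulsive sense: at each $t_\ell$ the state $s$ jumps according to the flow of $\dot{\sigma}=\alpha(1-\sigma)$ over a unit time, yielding the reset
\[
s(t_\ell^+)=1-(1-s(t_\ell^-))\,e^{-\alpha},
\]
while $x$ and $v$ are continuous across $t_\ell$ since the impulse enters only through $s$. A direct computation shows that this map sends $[0,1]$ into $[1-e^{-\alpha},1]\subset[0,1]$, so the $s$-face of $\cZ$ is preserved by the reset; the $x$- and $v$-faces are preserved trivially by continuity.

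Concatenating the two arguments, I would conclude forward invariance: starting from any initial condition in $\cZ$, the continuous flow keeps the state in $\cZ$ up to $t_1$; the reset at $t_1$ returns to $\cZ$; and induction over the (at most finitely many on any bounded interval, by $\mathcal{I}\in\mathbb{U}$) impulse times extends this to all $t\ge 0$. Since local existence of the continuous flow is standard under the smoothness of $\tau_j,\mu_j,\varphi_k$ and invariance in the compact set $\cZ$ precludes finite-escape, the solution is defined globally in forward time, completing the proof.
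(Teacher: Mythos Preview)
Your argument follows the same structure as the paper's proof: treat the system as flow-plus-jumps, check Nagumo tangency on each face of $\cZ$ for the unforced flow (gates via $\mu_j\in(0,1)$, voltage via the convex-combination bound on $E(x,s)$, synapse via the linear decay), and verify that the reset keeps $s$ in $[0,1]$. On that level the proposal is correct and essentially identical to the paper.

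The one substantive discrepancy is the jump map. The paper integrates \eqref{eq.synapse} across $t_\ell$ using the left-limit convention for the state-dependent factor, obtaining the affine reset
\[
s(t_\ell^+)=(1-\alpha)\,s(t_\ell^-)+\alpha,
\]
whereas you adopt the regularized (flow-of-the-gain-field) interpretation and get
\[
s(t_\ell^+)=1-(1-s(t_\ell^-))\,e^{-\alpha}.
\]
Both maps send $[0,1]$ into itself, so either one suffices for Lemma~\ref{lemma.fi}. However, the paper commits to the affine reset throughout: it reappears verbatim in the proofs of Theorem~\ref{thm:IES-dwell} (where the jump is shown to be nonexpansive in the incremental sense) and Lemma~\ref{lem:high-rate-synapse} (where the one-period map $s_{\ell+1}=(1-\alpha)e^{-T/\tau_s}s_\ell+\alpha$ and its fixed point are computed explicitly). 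Your exponential reset is also a contraction on $[0,1]$ and would yield analogous conclusions, but with different constants and a different periodic fixed point, so if you carry your convention forward you must redo those computations rather than quote the paper's. For internal consistency with the rest of the manuscript, align with the paper's affine jump; if you prefer your interpretation, flag it explicitly as a modeling choice and propagate it.
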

	\begin{proof}
		See Appendix.
	\end{proof}
	As for the rest state, we notice that an equilibrium $(s^*, x^*,v^*)$ of \eqref{eq:exc_model}–\eqref{eq.synapse} with $\vpre\equiv 0$, necessarily satisfies $s^*=0$, and the current balance
	\begin{equation}\label{eq.current.balance}
		\bar g_{\mathrm{leak}}(v^*-E_{\mathrm{leak}})+\sum_{k=1}^n \bar g_k \varphi_k( x^*)(v^*-E_k)=0.
	\end{equation}
	Assuming, as customary, $v^*=0$ (which can be done without loss of generality by appropriately re-defining the reverse potentials), we obtain $x_j^*=\mu_j(0)$, and \eqref{eq.current.balance} reduces to
	\begin{equation}
		\bar g_{\mathrm{leak}}E_{\mathrm{leak}}+\sum_{k=1}^n \bar g_k \varphi_k(x^*) E_k=0,
	\end{equation}
	Namely, $\bar g_{\mathrm{leak}}E_{\mathrm{leak}}$ compensates for the residual reversal
	potentials at $x^*$, ensuring that $(s^*, x^*,v^*) = (0, x^*, 0)$ is an equilibrium for \eqref{eq:exc_model}-\eqref{eq.synapse}.
	
	For the sake of compactness, we rewrite \eqref{eq:exc_model}–\eqref{eq.synapse} as
	\begin{equation}\label{eq.exc.compact}
		\dot z = F(z) + G(z)\,u,\quad z(0)=z_0,
	\end{equation}
	where $z \coloneq \col(s,x,v)\in\cZ$, $u \coloneq \vpre$, and
	\begin{align*}
		F(z) &\coloneq \col\big(-\alpha s,\ f(z)+g(z)s\big), \\
		G(z) &\coloneq \col\big(\beta(1-s),\ \mathbf{0}_{m + 1}\big), \\
		g(z)&\coloneq \col\big(\mathbf{0}_{m},\ -(\bar g_s/C)(v-E_s)\big),
	\end{align*}
	and where $f(z)$ collects the intrinsic unforced conductance–based dynamics of ${x}$ and $v$ in \eqref{eq:exc_model}. 
	
	While the previous conditions do not by themselves imply uniqueness of the rest state, assuming a unique attractive and locally exponentially stable equilibrium is both standard in the literature and well supported empirically for conductance-based neuron models \cite{izhikevich_dynamical_2006}. We therefore adopt the following.
	\begin{assumption}\label{ass.les.gas}
		$z^* \coloneq (0, x^*, 0)$ is the unique equilibrium of \eqref{eq.exc.compact} with $u \equiv 0$, and it is locally exponentially stable (LES) with a domain of attraction including $\cZ$.
	\end{assumption}
	 Indeed, for \eqref{eq.exc.compact} without a constant bias ($ u(t) \not\equiv\bar u$), the resting equilibrium is typically unique. When a constant current is introduced and tuned past a critical value, the rest state either loses stability or disappears via a bifurcation, and a stable periodic orbit becomes the attracting behavior \cite{izhikevich_neural_2000}.
	Under Assumption \ref{ass.les.gas}, $z^*$ is also exponentially stable on $\cZ$, as established by the following lemma.
	\begin{lemma}\label{lemma.les.attractive.es}
		If an equilibrium $z^*\in\cZ$ for \eqref{eq.exc.compact} with $u \equiv 0$ is locally exponentially stable and attractive on a compact forward-invariant set $\cZ$, then $z^*$ is exponentially stable on $\cZ$.
	\end{lemma}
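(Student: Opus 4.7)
The plan is to couple the local estimate provided by LES with a uniform reaching time into the LES ball, and to control the pre-LES transient by Gr\"onwall. Throughout, let $\delta>0$, $M\ge 1$, $\lambda>0$ be the LES constants: $|z(t;z_0)-z^*|\le M e^{-\lambda t}|z_0-z^*|$ whenever $|z_0-z^*|\le\delta$.

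\textbf{Step 1 (uniform reaching time).} I would first establish that some $T>0$ satisfies $|z(T;z_0)-z^*|<\delta$ for every $z_0\in\cZ$. By pointwise attractivity, for each $z_0\in\cZ$ there is $T(z_0)\ge 0$ with $|z(T(z_0);z_0)-z^*|<\delta/(2M)$. Continuous dependence of the flow on initial conditions then yields an open neighborhood $U(z_0)\subset\cZ$ on which $|z(T(z_0);z_0')-z^*|<\delta/M$. Since $\delta/M\le\delta$, applying the LES bound from time $T(z_0)$ keeps $|z(T(z_0)+\sigma;z_0')-z^*|\le M e^{-\lambda \sigma}\cdot (\delta/M)<\delta$ for all $\sigma\ge 0$ and all $z_0'\in U(z_0)$. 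A finite subcover $\{U(z_{0,i})\}_{i=1}^N$ of the compact set $\cZ$ and $T\coloneq\max_i T(z_{0,i})$ then yield the desired uniform property.

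\textbf{Step 2 (Gr\"onwall transient and assembly).} Since $F$ in \eqref{eq.exc.compact} is $C^1$ and $\cZ$ is compact and forward-invariant, $F$ admits a Lipschitz constant $L$ on $\cZ$. Gr\"onwall's inequality applied to $\tfrac{d}{dt}(z(t;z_0)-z^*)=F(z(t;z_0))-F(z^*)$ gives $|z(t;z_0)-z^*|\le e^{Lt}|z_0-z^*|$ for every $z_0\in\cZ$ and $t\ge 0$. In particular $|z(T;z_0)-z^*|\le e^{LT}|z_0-z^*|$, while Step~1 guarantees $|z(T;z_0)-z^*|<\delta$, so the LES bound can be chained from time $T$:
$$|z(t;z_0)-z^*|\le M e^{-\lambda(t-T)}e^{LT}|z_0-z^*|,\qquad t\ge T.$$
For $t\in[0,T]$ the Gr\"onwall bound already gives $|z(t;z_0)-z^*|\le e^{LT}|z_0-z^*|\le e^{(L+\lambda)T}e^{-\lambda t}|z_0-z^*|$. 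Both regimes are therefore covered by $M'\coloneq M e^{(L+\lambda)T}$ and $\lambda'\coloneq\lambda$, which is the required exponential bound uniformly on $\cZ$.

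The main obstacle is Step~1, since the hypothesis grants only pointwise attractivity, whereas exponential stability requires a uniform rate over all of $\cZ$. The compactness plus continuous-dependence argument sketched above is the flow-level analogue of upgrading pointwise to uniform convergence (in the spirit of Dini's theorem); once the uniform reaching time $T$ is secured, the remaining assembly is routine Gr\"onwall combined with the local LES estimate.
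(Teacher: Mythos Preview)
Your proof is correct and follows the same overall architecture as the paper's: secure a uniform reaching time into the LES ball, control the transient on $[0,T]$, then chain with the local exponential estimate for $t\ge T$. Two minor implementation differences are worth noting. First, for the uniform reaching time the paper simply asserts that, by time-invariance, attractivity of $z^*$ on the compact set $\cZ$ is automatically uniform; your compactness\,/\,continuous-dependence cover argument is exactly what justifies that one-line claim, so your Step~1 is the more self-contained of the two. Second, for the transient the paper does not invoke Gr\"onwall: it uses the diameter bound $R\coloneq\sup_{\xi\in\cZ}\|\xi-z^*\|$ together with a case split (if $\|z_0-z^*\|>r$ then $\|\varphi(t,z_0)-z^*\|\le R\le (R/r)\|z_0-z^*\|$ on $[0,T^*]$). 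Your Gr\"onwall route avoids the case split at the price of introducing the Lipschitz constant $L$; the paper's route avoids $L$ at the price of the split. Either way one lands on a bound of the form $K e^{-\lambda t}\|z_0-z^*\|$ with the same rate $\lambda$.
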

	\begin{proof}
		See Appendix.
	\end{proof}
	
	We now associate with \eqref{eq.exc.compact} a discrete-event model by composing its continuous-time input--output dynamics with an event readout.
	
Given $z_0\in\cZ$ and $\mathcal I\in\mathbb U$, let $\varphi(t,z_0,\mathcal I)$ denote the solution of \eqref{eq.exc.compact} driven by \eqref{eq.disc.v_pre.dirac}, and let $v(t,z_0,\mathcal I)$ be its voltage component. We introduce an \emph{event readout} as a map
\begin{equation}\label{eq.discrete.readout}
\mathcal R:\mathcal V\to\mathbb U,
\end{equation}
where $\mathcal V \subset C^1(\R{}_+)$ is a suitable class of voltage signals for which the detection rule is well posed and produces a locally finite event train. For each $v\in\mathcal V$, the set $\mathcal R[v]$ is the event train extracted from the voltage trajectory according to a prescribed detection rule. As an example, one may define $\mathcal R$ by upward threshold crossings with minimum velocity
\[
t_h\in\mathcal R[v]
\quad\Longleftrightarrow\quad
v(t_h)=v_{\rm th},
\quad
\dot v(t_h)\ge \gamma>0,
\]
for some threshold $v_{\rm th}$ and slope bound $\gamma>0$, together with the assumption that such crossings are isolated. This guarantees that $\mathcal R[v]$ is a locally finite set of event times.

The discrete-event model associated with \eqref{eq.exc.compact} is then defined by
\begin{equation}\label{eq.discrete.model}
	\mathcal F:\cZ\times\mathbb U\to\mathbb U,
	\quad
	\mathcal F(z_0,\mathcal I)
	:=
	\mathcal R\big[v(\cdot,z_0,\mathcal I)\big],
\end{equation}
for all pairs $(z_0,\mathcal I)$ such that $v(\cdot,z_0,\mathcal I)\in\mathcal V$. That is, for each initial condition $z_0$ and input event train $\mathcal I$, the system generates a continuous-time output trajectory $v(\cdot,z_0,\mathcal I)$, and the readout map $\mathcal R$ extracts from it the corresponding event train.
	
	A block diagram of the overall architecture is shown in Figure \ref{fig.architecture}.
	In the following, we will say that the discrete-event system is \emph{reliable} if the underlying forced continuous dynamics is contractive. This inherits the robustness properties implied by contraction, such as robustness to disturbances and parameter uncertainties.
	
	\begin{figure}
		\centering
		\includegraphics[width=0.5\textwidth]{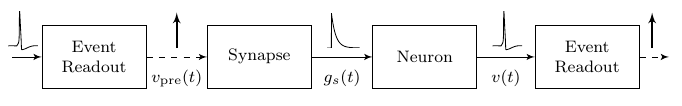}
		\caption{Block diagram of the overall architecture. Spikes are converted into event timings through an event readout. Bold lines represent continuous-time signals, while dashed lines represent event signals.}
		\label{fig.architecture}
	\end{figure}
	\section{Contraction Properties of Excitable Systems}\label{sec.contraction}
	In this section we show the main results of the paper, proving contraction of the unforced system \eqref{eq.exc.compact}, and showing how the property is maintained by certain inputs of interest.
	%
	\begin{definition}[Incremental Exponential Stability]\label{def.is}
		Let $\cZ$ be a compact forward invariant set for \eqref{eq.exc.compact}. The system \eqref{eq.exc.compact} with $u \equiv 0$ is said to be incrementally exponentially stable (IES) on $\cZ$ if there exist $k \ge 1$, and $\lambda > 0$ such that, for all $z_1,z_2\in\cZ$,
		\[
		\|\varphi(t,z_1)-\varphi(t,z_2)\|\le k e^{- \lambda t}\|z_1-z_2\| \quad \forall t\ge 0.
		\]
	\end{definition}
	It is known that incremental stability and uniform convergence coincide on compact sets \cite{RUFFER2013277}. Therefore, if system \eqref{eq.exc.compact} is IES on $\cZ$ with $u \equiv 0$, it posses a unique exponentially stable steady-state solution in $\cZ$. In our case, it coincides with the unique exponentially stable equilibrium. The next statement upgrades the pointwise stability to an incremental estimate on the whole set, hence to contraction of the unforced flow.
	
	\begin{theorem}\label{thm.contr.no.u}
		Under Assumption \ref{ass.les.gas}, consider \eqref{eq.exc.compact} with $u\equiv 0$ on the forward-invariant set $\cZ$. Let $z^*\in\cZ$ be the unique equilibrium of \eqref{eq.exc.compact}, exponentially stable on $\cZ$. Then the unforced system \eqref{eq.exc.compact} is incrementally exponentially stable (IES) on $\cZ$.
	\end{theorem}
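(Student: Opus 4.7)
The plan is a two-scale argument: first establish incremental exponential contraction in a small neighborhood $\Omega$ of $z^*$ via the Hurwitz linearization, then use Lemma \ref{lemma.les.attractive.es} to enforce a uniform capture time $T$ after which every trajectory starting in $\cZ$ lies in $\Omega$, and finally patch with a Gronwall estimate on $[0, T]$ to produce the global IES bound on $\cZ$.

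For the local step, let $A \coloneq (\partial F/\partial z)(z^*)$. By Assumption \ref{ass.les.gas}, $A$ is Hurwitz, so there exists $P \succ 0$ with $A^\top P + P A = -I$. By continuity of $\partial F/\partial z$, one can choose $\delta, \mu_0 > 0$ such that
\[
(\partial F/\partial z)(z)^\top P + P\,(\partial F/\partial z)(z) \preceq -\mu_0 P \qquad \forall z\in \cB_\delta(z^*),
\]
and then choose $c > 0$ so that the sublevel set $\Omega \coloneq \{z:(z-z^*)^\top P(z-z^*)\le c\}$ is contained in $\cB_\delta(z^*)$. This set is convex and forward invariant; moreover, for any $z_1, z_2 \in \Omega$, writing $e(t) \coloneq \varphi(t, z_1) - \varphi(t, z_2)$ and using $F(\varphi(t, z_1)) - F(\varphi(t, z_2)) = \bigl(\int_0^1 (\partial F/\partial z)(\varphi(t, z_2) + s\, e(t))\, ds\bigr) e(t)$ together with the convexity of $\Omega$ yields $\frac{d}{dt}(e^\top P e) \le -\mu_0\, e^\top P e$. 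Hence the local incremental bound
\[
\|\varphi(t, z_1) - \varphi(t, z_2)\| \le \kappa_P\, e^{-(\mu_0/2) t}\|z_1-z_2\|, \qquad z_1, z_2 \in \Omega,\ t\ge 0,
\]
holds with $\kappa_P \coloneq \sqrt{\lambda_{\max}(P)/\lambda_{\min}(P)}$.

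For the concatenation, Lemma \ref{lemma.les.attractive.es} supplies constants $\kappa, \mu > 0$ with $\|\varphi(t, z_0) - z^*\| \le \kappa e^{-\mu t}\|z_0 - z^*\|$ for every $z_0 \in \cZ$. Since $\cZ$ is compact, there exists a uniform time $T > 0$ such that $\varphi(t, z_0) \in \Omega$ for all $z_0 \in \cZ$ and $t \ge T$. On $[0, T]$, the $C^1$ field $F$ is Lipschitz on $\cZ$ with some constant $L$, so Gronwall yields $\|\varphi(t, z_1) - \varphi(t, z_2)\| \le e^{LT}\|z_1 - z_2\|$. Restarting at $t = T$ with $\varphi(T, z_i) \in \Omega$ and applying the local bound gives, for $t \ge T$,
\[
\|\varphi(t, z_1) - \varphi(t, z_2)\| \le \kappa_P\, e^{LT}\, e^{-(\mu_0/2)(t - T)}\|z_1 - z_2\|,
\]
and combining with the transient produces IES on $\cZ$ with rate $\lambda = \mu_0/2$ and an overshoot $k$ depending on $T$, $L$, and $\kappa_P$.

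The delicate point is the first step: the matrix-measure inequality must be valid uniformly on a \emph{forward-invariant, convex} set, since convexity is what lets the mean-value integral close both the single-trajectory Lyapunov estimate (which proves $\Omega$ is forward invariant) and the incremental estimate. This is handled by first shrinking $\delta$ via continuity of $\partial F/\partial z$ and then choosing $c$ so that $\Omega \subset \cB_\delta(z^*)$. The remaining ingredients --- uniform capture from a compact set by Lemma \ref{lemma.les.attractive.es} and Gronwall on a bounded transient --- are routine consequences of Assumption \ref{ass.les.gas} and the $C^1$ regularity of $F$.
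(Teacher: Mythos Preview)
Your proof is correct and follows essentially the same route as the paper's: local contraction near $z^*$ via a Lyapunov matrix inequality in a $P$-metric neighborhood, a uniform capture time $T^*$ from the compact set $\cZ$, a Gr\"onwall bound on the transient $[0,T^*]$, and concatenation. The only cosmetic difference is that the paper works directly with $P$-norm balls and cites \cite[Thm.~2.29]{pavlov_uniform_2006} for the local contraction step, whereas you spell out the convex sublevel set $\Omega$ and the mean-value argument explicitly; these are the same construction.
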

	\begin{proof}
		See Appendix.
	\end{proof}
	
	The previous theorem is an immediate consequence of exponential stability of the equilibrium point on a compact invariant set. Since excitable systems are not necessarily uniformly contractive with respect to the input \cite{jouffroy2003simple, Bin2025reliability}, contraction may not be preserved if an input is applied. However, as the next theorem will show, for the class of inputs \eqref{eq.disc.v_pre.dirac} contraction is preserved, provided that the events in the train are, on average, sufficiently spread in time. The theorem builds on classical average dwell-time arguments \cite{Hespana1999dwell,liberzon2003switching}.
\begin{theorem}[IES under average dwell time]\label{thm:IES-dwell}
	By Theorem~\ref{thm.contr.no.u}, there exist $k> 1$ and $\lambda>0$ such that, for all $t_2\ge t_1\ge 0$ and all $z_1,z_2\in\cZ$,
	\[
	\|\varphi(t_2,z_1)-\varphi(t_2,z_2)\|
	\le k\,e^{-\lambda(t_2-t_1)}\,\|\varphi(t_1,z_1)-\varphi(t_1,z_2)\|.
	\]
	Let $u$ be the impulse train \eqref{eq.disc.v_pre.dirac}. If the impulse times satisfy the average dwell-time bound 
	\[
	N(t_2,t_1)\le N_0+\frac{t_2-t_1}{\tau_a} , \quad \forall t_2 \ge t_1 \ge 0,
	\]
	with $N_0 \ge 0$ and $\tau_a > 0$ independent of $t_2,t_1$, then the forced system \eqref{eq.exc.compact} under \eqref{eq.disc.v_pre.dirac} is IES on $\cZ$ whenever 
	\[
	\lambda>\frac{\ln k}{\tau_a}.
	\]
\end{theorem}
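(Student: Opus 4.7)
The plan is to decompose each forward trajectory into a concatenation of unforced arcs and isolated state jumps, bound the incremental distance on each piece separately, and then chain the estimates using the average dwell-time hypothesis. Let $e(t):=\|\varphi(t,z_1,\cI)-\varphi(t,z_2,\cI)\|$ with $\cI=\{t_\ell\}\in\mathbb U$. On every open inter-impulse interval $(t_\ell,t_{\ell+1})$ the input vanishes, so by time-invariance of \eqref{eq.exc.compact} and Theorem~\ref{thm.contr.no.u},
\[
e(t_{\ell+1}^-)\le k\,e^{-\lambda(t_{\ell+1}-t_\ell)}\,e(t_\ell^+),
\]
and analogously on the initial arc $[0,t_1)$ and the final arc $(t_{N},t]$.

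Next I would analyze the reset induced by each Dirac at $t_\ell$. Only the $s$-equation feels the impulse, because the $G(z)$ in \eqref{eq.exc.compact} has its only nonzero block on the $s$-coordinate; the $(x,v)$-components remain continuous at $t_\ell$. Integrating the scalar ODE $\dot s=-s/\tau_s+\alpha(1-s)\vpre$ across the Dirac, interpreted as the limit of mollified unit-mass pulses of vanishing width, gives the closed-form reset
\[
s(t_\ell^+)=1-e^{-\alpha}\bigl(1-s(t_\ell^-)\bigr),
\]
i.e.\ an affine map with slope $e^{-\alpha}\in(0,1)$ since $\alpha\in(0,1]$. Hence $|s_1(t_\ell^+)-s_2(t_\ell^+)|=e^{-\alpha}|s_1(t_\ell^-)-s_2(t_\ell^-)|$, and because the other components are unchanged, $e(t_\ell^+)\le e(t_\ell^-)$ in any norm (strict contraction on the $s$-coordinate, identity on $(x,v)$). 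Note that this reset also preserves $\cS=[0,1]$, consistent with Lemma~\ref{lemma.fi}.

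Combining the two bounds by induction over the $N\coloneq N(t,0)$ impulses in $[0,t]$ yields
\[
e(t)\le k^{N+1}\,e^{-\lambda t}\,e(0).
\]
The average dwell-time hypothesis $N\le N_0+t/\tau_a$ gives $k^{N+1}\le k^{N_0+1}\exp\!\bigl((\ln k/\tau_a)\,t\bigr)$, so
\[
e(t)\le k^{N_0+1}\exp\!\left[\Bigl(\tfrac{\ln k}{\tau_a}-\lambda\Bigr)t\right]e(0),
\]
which is IES on $\cZ$ with effective rate $\lambda-\ln k/\tau_a>0$ under the standing assumption. The only nontrivial step is the rigorous interpretation of the state-dependent Dirac term, which I would secure by the mollification argument just sketched: replace $\delta(\cdot-t_\ell)$ by a sequence of compactly supported bumps $\rho_\varepsilon$ of unit mass and width $\varepsilon\to 0$, integrate the resulting smooth logistic-type equation for $s$ explicitly, and pass to the limit; continuous dependence on initial data propagates this limit through the unforced arcs. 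Everything else is a routine chaining of the two bounds above.
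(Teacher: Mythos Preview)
Your argument is correct and follows the same route as the paper: partition the time axis at the impulse times, apply the unforced IES bound of Theorem~\ref{thm.contr.no.u} on each arc, observe that the jump map is nonexpansive on the state difference, and chain using the average dwell-time bound to obtain the effective rate $\lambda-\ln k/\tau_a$ with overshoot $k^{N_0+1}$. The one substantive discrepancy is the reset formula: the paper adopts the formal jump $s^+=(1-\alpha)s^-+\alpha$ (cf.\ the proof of Lemma~\ref{lemma.fi}), with incremental slope $1-\alpha$, whereas your mollification yields $s^+=1-e^{-\alpha}(1-s^-)$ with slope $e^{-\alpha}$; since both slopes lie in $[0,1)$ the nonexpansiveness step (in the Euclidean norm---your ``in any norm'' is a slight overreach) and the rest of the chaining go through unchanged.
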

	\begin{proof}
		See Appendix.
	\end{proof}
	The constant $N_0$, and $\tau_a$ have the standard average dwell-time interpretation. The quantity $\tau_a$ represents a lower bound on the average time between impulses, whereas $N_0$ is a chatter bound allowing a finite number of extra impulses over short intervals. 
	
	Considering now the class of periodic signals, it is of particular interest to show how contraction depends on the signal period $T$.
	\begin{corollary}[Periodic impulse trains]\label{cor:periodic-IES}
		If the impulse train \eqref{eq.disc.v_pre.dirac} is periodic with period $T>0$, then the forced system is IES whenever
		\[
		\frac{1}{T}<\frac{\lambda}{\ln k}.
		\]
	\end{corollary}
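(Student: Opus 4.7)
The plan is to reduce the corollary to Theorem~\ref{thm:IES-dwell} by verifying that a periodic impulse train satisfies the average dwell-time bound with the natural choice $\tau_a=T$, and then substituting into the margin condition $\lambda>\ln k/\tau_a$.

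First, I fix the impulse sequence $t_\ell=\ell T$, $\ell\in\mathbb N$, and count how many impulses fall in an arbitrary interval $[t_1,t_2]$ with $t_2\ge t_1\ge 0$. Since the $t_\ell$'s are equispaced with spacing $T$, the number of integers $\ell$ such that $\ell T\in[t_1,t_2]$ is at most $\lfloor (t_2-t_1)/T\rfloor+1$, and therefore
\[
N(t_2,t_1)\le 1+\frac{t_2-t_1}{T}.
\]
This is exactly an average dwell-time bound with $N_0=1$ and $\tau_a=T$, both independent of $t_1$ and $t_2$.

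Next, I invoke Theorem~\ref{thm:IES-dwell} with this choice of $(N_0,\tau_a)$. The theorem guarantees that the forced system \eqref{eq.exc.compact} under \eqref{eq.disc.v_pre.dirac} is IES on $\cZ$ provided $\lambda>\ln k/\tau_a$, i.e.\ $\lambda>\ln k/T$. Rearranging gives the equivalent condition $1/T<\lambda/\ln k$, which is precisely the statement of the corollary.

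There is no serious obstacle here: the only point requiring care is the $+1$ in the counting inequality, which reflects the possibility that both endpoints of $[t_1,t_2]$ lie inside the same period. That $+1$ is harmless because the average dwell-time bound absorbs it into the constant $N_0$, and Theorem~\ref{thm:IES-dwell} shows that the IES property depends only on $\tau_a$ (which sets the decay budget per impulse) and not on $N_0$ (which contributes only a bounded multiplicative constant to the incremental estimate).
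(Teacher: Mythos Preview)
Your proof is correct and follows essentially the same route as the paper: verify the average dwell-time bound $N(t_2,t_1)\le 1+(t_2-t_1)/T$ for the periodic train, identify $N_0=1$ and $\tau_a=T$, and apply Theorem~\ref{thm:IES-dwell}. The only cosmetic difference is that the paper counts impulses in the half-open interval $(t_1,t_2]$ while you use $[t_1,t_2]$, but the resulting bound and conclusion are identical.
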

	\begin{proof}
		See Appendix.
	\end{proof}
	
	For a periodic train with inter-impulse period $T$, if the rate of arrivals of the impulses is below the margin $\lambda/\ln k$, the exponential convergence between any two trajectories of the forced system survives. As $T$ decreases, the bound eventually fails, matching the intuition that very high-rate pulsing undermines contractivity, since the synapse tends to be always open, as proved in the following lemma.
	
\begin{lemma}[High–rate periodic impulses keep the synapse open]
	\label{lem:high-rate-synapse}
	Let \eqref{eq.synapse} be driven by \eqref{eq.disc.v_pre.dirac} with impulse times
	$t_\ell=\ell T$, $T>0$. For each $T>0$ there exists a unique $T$-periodic solution
	$s_*^{(T)}$ of \eqref{eq.synapse} forced by \eqref{eq.disc.v_pre.dirac}, which is
	globally exponentially attractive. Moreover, for every $L>0$ and every
	$\phi\in \mathcal L^1([0,L])$,
	\[
	\lim_{T \to 0}\ \int_0^L \phi(t)\,\big(s_*^{(T)}(t)-1\big)\,dt \;=\; 0,
	\]
	that is, $s_*^{(T)} \to 1$ weakly on compact intervals as $T \to 0$.
\end{lemma}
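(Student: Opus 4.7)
The plan is to convert \eqref{eq.synapse} under the Dirac forcing \eqref{eq.disc.v_pre.dirac} into an equivalent scalar impulsive system on $[0,1]$, read off the periodic orbit from a one-dimensional Poincar\'e map, and then take the $T\to 0$ limit by explicit computation.

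First I would fix the semantics of \eqref{eq.synapse} at impulse times. Between impulses the equation reduces to $\dot s=-s/\tau_s$ and $s$ decays exponentially. At each $t_\ell$ the term $\alpha(1-s)\,\delta(t-t_\ell)$ is a product of a distribution with a state-dependent factor and is not defined a priori, so I would interpret it by mollification: replacing $\delta$ by a smooth unit-area bump $\delta_\varepsilon$, solving the separable ODE $\dot s=\alpha(1-s)\,\delta_\varepsilon$ exactly via $\log(1-s)$, and letting $\varepsilon\to 0$. This yields the partial-reset jump
\[
s(t_\ell^+)=1-(1-s(t_\ell^-))\,e^{-\alpha},
\]
which is affine in $s$, maps $[0,1]$ into itself, and makes the interval invariant for the resulting hybrid system.

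The second step is the Poincar\'e map. Composing the one-period decay $s\mapsto s\,e^{-T/\tau_s}$ with the jump gives the affine map $P(s)=(1-e^{-\alpha})+e^{-\alpha-T/\tau_s}\,s$ on $[0,1]$, a strict contraction with factor $e^{-\alpha-T/\tau_s}<1$. Banach's theorem yields the unique fixed point
\[
s_0=\frac{1-e^{-\alpha}}{1-e^{-\alpha-T/\tau_s}},
\]
and filling in the between-event decay produces the unique $T$-periodic solution $s_*^{(T)}$. The per-period contraction factor transfers to continuous-time global exponential attractivity, because between events the scalar flow $\dot s=-s/\tau_s$ is non-expansive on differences and each jump contracts differences by $e^{-\alpha}$.

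Finally I would take $T\to 0$. A direct calculation gives
\[
1-s_0=\frac{e^{-\alpha}(1-e^{-T/\tau_s})}{1-e^{-\alpha-T/\tau_s}}\xrightarrow[T\to 0]{}0,
\]
and the minimum of $s_*^{(T)}$ over one period, $s_0\,e^{-T/\tau_s}$, also tends to $1$. Hence $\sup_{t\in[0,L]}|s_*^{(T)}(t)-1|\to 0$, and the weak convergence follows immediately from H\"older's inequality,
\[
\left|\int_0^L\phi(t)\,(s_*^{(T)}(t)-1)\,dt\right|\le \|\phi\|_{L^1([0,L])}\sup_{t\in[0,L]}|s_*^{(T)}(t)-1|.
\]
The delicate step is the mollification argument that fixes the jump rule; once that is in place, the remainder reduces to the one-dimensional affine computation above, and the fact that convergence is actually uniform makes the weak statement immediate.
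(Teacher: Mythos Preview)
Your argument follows exactly the route of the paper's own proof: between impulses the flow is $s\mapsto se^{-(t-t_\ell)/\tau_s}$, the jump at each impulse is affine, the resulting one-dimensional Poincar\'e map is a strict contraction whose unique fixed point generates the $T$-periodic orbit, exponential attractivity comes from the per-period contraction factor, and the limit $T\to 0$ is obtained by showing $\sup_t|s_*^{(T)}(t)-1|\to 0$ and then invoking the same H\"older bound against $\|\phi\|_{L^1}$.

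The one substantive discrepancy is the jump rule. The paper has already fixed the convention
\[
s(t_\ell^+)=(1-\alpha)\,s(t_\ell^-)+\alpha
\]
in the proof of Lemma~\ref{lemma.fi} (and uses it again in the proof of Theorem~\ref{thm:IES-dwell}); it corresponds to freezing the factor $(1-s)$ at its left limit when integrating $\alpha(1-s)\delta(t-t_\ell)$ across the impulse. Your mollification argument instead yields $s(t_\ell^+)=1-(1-s(t_\ell^-))e^{-\alpha}$, and consequently your fixed point $s_0=(1-e^{-\alpha})/(1-e^{-\alpha-T/\tau_s})$ differs from the paper's $s_T^\star=\alpha/\bigl(1-(1-\alpha)e^{-T/\tau_s}\bigr)$. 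Both jumps are affine contractions of $[0,1]$ and both periodic orbits satisfy $\sup_t|s_*^{(T)}(t)-1|\to 0$ as $T\to 0$, so your proof is internally correct and the lemma's conclusion is unaffected. For consistency with the rest of the paper, however, you should adopt the paper's jump rule; this also removes the need for the mollification step, which you correctly flag as the only delicate point.
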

	Lemma~\ref{lem:high-rate-synapse} shows that, for $T \to 0$ the synaptic state approaches a constant value $1$. The intuition behind is that, for impulses arriving at very high rate, system \eqref{eq.exc.compact} behaves, on compact intervals of time, in the same way as it were subject to a constant input $\vpre(t) \equiv \bar{v}_{\text{pre}}$. Thus, in view of \cite[Thm. 1]{sontag2013mathematical}, this \enquote{tonic limit} destabilizes the rest state and creates a limit cycle, explaining the loss of the contraction margin in Theorem~\ref{thm:IES-dwell} as $T\to 0$. 
	
Simulations illustrating these results are shown in Figure~\ref{fig.contraction}. When impulses are too close to each other, they approximate a constant current that can induce a limit cycle. Trajectories then do not converge to a common steady-state response, the phase shift from different initial conditions persists, and the behavior is not reliable. By contrast, sufficiently sparse periodic impulses yield a unique and attractive steady-state response, so that all trajectories \emph{entrain} to the input \cite{russo2010global}, which implies reliable behavior. These observations align with \cite{Bin2025reliability}, where contraction of the forced system is linked to an average-time condition stating that trajectories must spend enough time in a contractive region to synchronize asymptotically.
	
	\begin{figure}
		\centering
		\includegraphics[width=0.5\textwidth]{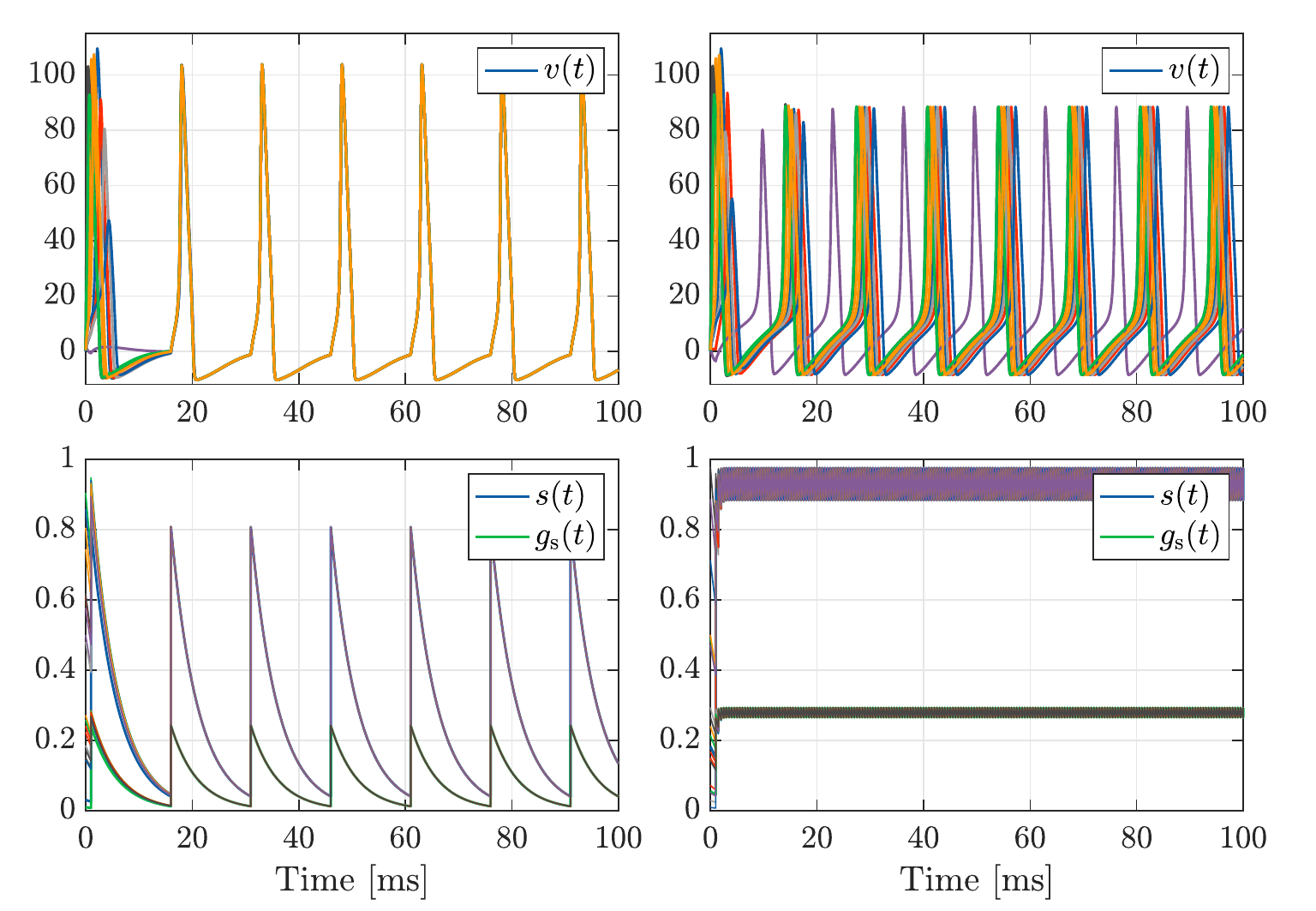}
		\caption{Hodgkin–Huxley neuron driven by periodic impulse trains. Top: membrane voltage for multiple initial conditions under the same input. Bottom: synaptic state $s(t)$ and conductance $g_{\mathrm s}(t)$. Left column (sparse, $T=15$\,ms): contraction is preserved and all trajectories converge to a unique steady-state response. Right column (dense, $T=0.5$\,ms): the synapse is effectively always open (constant input), a limit cycle emerges, and phase offsets persist across initial conditions. ($\alpha=0.8$, $\tau_s=5$\,ms, $\bar g_s=0.3$\,mS/cm$^2$, $E_{\text{s}}=65$\,mV).}
		\label{fig.contraction}
	\end{figure}
	
	\section{Event Reliability}\label{sec:reliability}
	Reliability is a property of the discrete event map, yet in our setting it is inherited directly from contraction of the continuous dynamics \cite{Bin2025reliability}. If \eqref{eq.exc.compact} is contractive on the compact forward–invariant set $\cZ$, then for any fixed input sequence $\cI\in\mathbb U$ and fixed parameters the corresponding trajectories $\varphi(t,z_0,\cI)$ converge to the same steady–state response. By means of the event readout \eqref{eq.discrete.readout}, the events extracted from these trajectories align asymptotically. In this sense, contraction of the flow implies reliability of spike times.
	
	Moreover, the robustness inherited by contraction carries over to events. Under bounded disturbances or small parameter variations across trials, incremental exponential stability yields uniform trajectory bounds that scale with the perturbation magnitude. Consequently, the induced timing errors remain uniformly \emph{bounded} after a transient. In contrast, when the flow is not contractive, differences in initial conditions or parameters can lead to persistent phase offsets and highly variable spike timings.
	
	In practice, the threshold $v_{\text{th}}$ of the event readout \eqref{eq.discrete.readout} is chosen inside the spike excursion where the upstroke is monotone, and a value near $(E_{\min}+E_{\max})/2$ often suffices. In planar excitable systems (e.g., FitzHugh–Nagumo or Morris–Lecar \cite{fitzhugh_impulses_1961, morris1981voltage}), this corresponds to placing $v_{\text{th}}$ just above the upper knee of the fast–variable nullcline, where the trajectory jumps to the excited branch and crosses the threshold exactly once with positive slope.
	
	A simulation illustrating these points is shown in Figure~\ref{fig.raster.plot}. The neuron is driven by the same random impulse train across trials while initial conditions, and parameters are perturbed. When contraction of the underlying dynamics is preserved, trajectories remain uniformly close after a transient despite these perturbations, and the extracted event sequences align up to a small bounded timing error. In the non–contractive regime, by contrast, trial–to–trial phase differences persist and spike times vary consistently.
	
	\begin{figure}[h]
		\centering
		\includegraphics[width=0.5\textwidth]{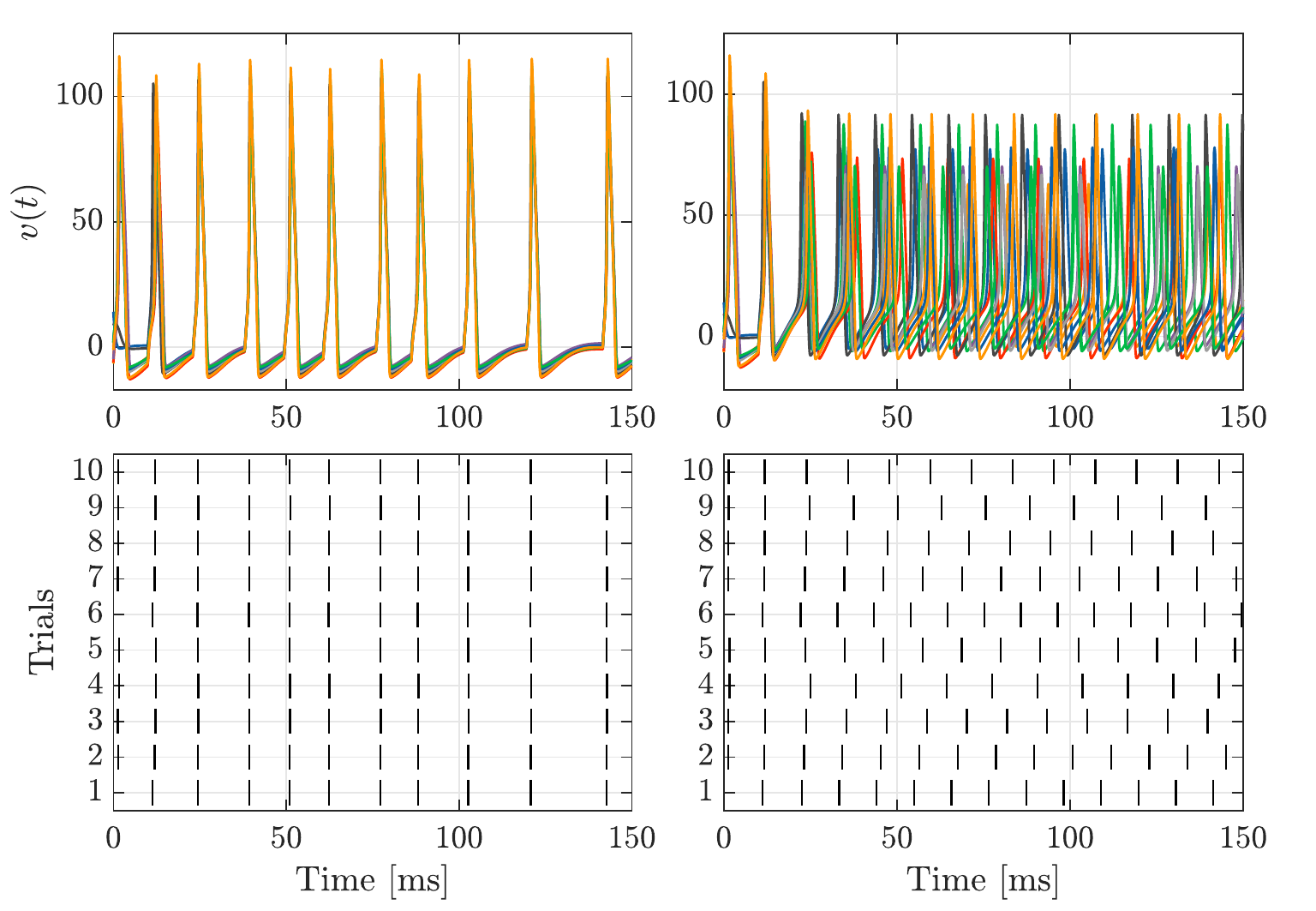}
		\caption{Hodgkin–Huxley neuron with a conductance synapse driven by two impulse trains. 
			Top: membrane voltage from $10$ trials under the same input, with random initial condition and parameters perturbed by $\pm 20\%$ of baseline values. Bottom: spike raster plots. 
			Left column (sparse): a random train with dead-time yields tightly aligned spike times across trials. Right column (dense): a uniform train with period $T=0.01$\,ms keeps the synapse effectively open, producing an almost constant conductance, with trajectories showing tonic-like activity with trial-dependent phase.
			(Baseline parameters: $\alpha=1$, $\tau_s=4$\,ms, $\bar{g}_{\mathrm s}=0.425$\,mS/cm$^2$, $E_{\mathrm s}=65$\,mV).}
		\label{fig.raster.plot}
	\end{figure}
	\section{Discussion}
\subsection{Rate vs spike code}
A long-standing distinction in neural coding contrasts spike-code and rate-code views \cite{brette_philosophy_2015}. In a spike-based view, precise event times are taken to be the meaningful carrier of information. In a rate-code view, by contrast, the firing rate is regarded as the fundamental signal, effectively averaging over spikes and discarding timing precision.
Our analysis provides a dynamical perspective on this distinction. The key point is that precise event times can serve as meaningful signals only when they are reproducible across repeated trials with the same input, up to small perturbations. This is exactly the kind of robustness promoted by contraction. When the forced flow is contractive, differences in initial conditions are forgotten exponentially fast, and the output spike train becomes a stable feature of the input-output response. On the other hand, when contraction is lost, the situation changes qualitatively. The same input may produce responses that remain offset in phase or exhibit persistent timing discrepancies across trials. In such a regime, the exact timing of each spike is no longer robust, and interpreting the event train itself as the signal becomes questionable. What may still remain reproducible is a coarser observable, such as the average number of spikes over a time window. This naturally favors a rate-based description.
From this viewpoint, the distinction between spike code and rate code is not only a matter of representation, but also a matter of dynamical regime. Contractive responses support a spike-based description because the spike pattern is a stable attractor of the dynamics. Non-contractive responses, by contrast, undermine the reliability of individual event times and therefore motivate a rate-level description. In this sense, contraction provides a criterion for when an excitable system can be regarded as a reliable event generator, and when only a coarse measure of activity should be trusted.
	
\subsection{Endogenous vs  exogenous oscillators}
Mathematical models of rhythm generation often fall into two distinct categories: endogenous oscillators and exogenous oscillators. Endogenous oscillators are modelled as limit cycle oscillations of autonomous differential equations \cite{Pikovsky2001}. They oscillate without any external input. Exogenous oscillators are modelled as contractive systems entrained by oscillatory inputs \cite{sontag2010contractive}. They cannot oscillate without external input. 
Endogenous oscillators are not reliable because autonomous limit cycles are phase sensitive \cite{Pikovsky2001}. Exogeneous oscillators do not resonate because they lack internal resonance mechanisms. 

Excitable systems borrow the best of the two worlds. They are exogenous, because they do not oscillate without external inputs. But they are also endogenous, because the mechanism of their oscillations is purely internal. The only role of the external input is to entrain the internal rhythm in a reliable manner.
	
	%

\section{Conclusions}
This paper has studied the contraction properties of conductance-based neuron models and their implications for the reliability of event timings. Contraction is a continuity property of the underlying continuous-time dynamics that supports the robustness of the associated discrete-event behavior. On the continuous side, we established incremental exponential stability of the unforced Hodgkin–Huxley dynamics on a compact forward-invariant set. Under impulsive forcing, we showed that contraction is preserved when impulses are sufficiently spaced in time, and derived an explicit sufficient condition for periodic inputs. At the opposite extreme, we characterized the high-rate limit: the synapse behaves as effectively always open, the dynamics approach a constant-conductance regime, and a limit cycle may emerge, thereby eroding the contraction margin. This clarifies the contrast between the robustness of forced contractive responses and the phase sensitivity inherent to autonomous oscillations.

On the discrete side, reliability follows from contraction of the continuous flow. With an event readout that yields isolated events, contraction makes trajectories converge after a transient, which in turn yields convergence of the detected spike times. The same mechanism also explains robustness across trials: bounded disturbances and small parameter variations produce uniformly bounded deviations in the trajectories and, consequently, uniformly bounded timing errors. Simulations with randomized initial conditions and parameter and kinetic perturbations illustrate both regimes: sparse inputs lead to a unique attractive steady response and reliable spike timing, whereas dense inputs induce tonic activity and a corresponding loss of timing reliability.

Beyond methodology, the results also clarify coding viewpoints. In contractive regimes, precise spike timing can serve as a robust carrier of information. In non-contractive regimes, by contrast, persistent phase variability undermines timing reproducibility, so that a rate-based description becomes the meaningful one.

Future work will extend the framework to broader classes of conductance-based neuron models with diverse ionic currents, incorporate richer synaptic dynamics and input classes, and address networks of excitable systems.
	
	\appendix
	
\begin{proof}[Proof of Lemma \ref{lemma.fi}]
	Consider $\vpre(t)$ as in \eqref{eq.disc.v_pre.dirac}. Between impulses the synapse dynamics obeys
	$\dot s=-\alpha s$, so one has
	\begin{equation}\label{thm.eq.syn.flow}
		s(t)=s(t_\ell^+)\,e^{-\alpha(t-t_\ell)},
		\quad \forall t\in[t_\ell,t_{\ell+1}).
	\end{equation}
	At an impulse time $t_\ell$, the jump condition is obtained by integrating
	\eqref{eq.synapse} across $t_\ell$. Since the decay term is integrable, only the impulsive term contributes to the jump. Setting $q\coloneq 1-s$,
	one gets across the impulse
	\[
	\dot q = -\beta q\,\delta(t-t_\ell),
	\]
	hence
	\[
	q(t_\ell^+) = q(t_\ell^-)e^{-\beta}.
	\]
	Equivalently,
	\[
	1-s(t_\ell^+) = \bigl(1-s(t_\ell^-)\bigr)e^{-\beta},
	\]
	that is,
	\begin{equation}\label{thm.eq.syn.jump}
		s(t_\ell^+) = 1-\bigl(1-s(t_\ell^-)\bigr)e^{-\beta}.
	\end{equation}
	Assume $s(t_\ell^-)\in[0,1]$. Then $1-s(t_\ell^-)\in[0,1]$, so
	\[
	0\le \bigl(1-s(t_\ell^-)\bigr)e^{-\beta}\le 1,
	\]
	and therefore $s(t_\ell^+)\in[0,1]$. Moreover, because of \eqref{thm.eq.syn.flow}, if $s(t_\ell^+)\in[0,1]$, then
	\[
	s(t)\in[0,1],\qquad \forall t\in[t_\ell,t_{\ell+1}).
	\]
	Therefore $\cS$ is forward invariant for \eqref{eq.synapse}.
	
	The gate dynamics evolve on $\cX$. Evaluating the vector field describing the dynamics at the boundaries $x_j\in\{0,1\}$ gives
	\begin{align*}
		x_j=0 \,:\, \tau_j(v)\dot x_j=\mu_j(v) &\implies \dot x_j>0,\\
		x_j=1 \,:\, \tau_j(v)\dot x_j=\mu_j(v)-1 &\implies \dot x_j<0,
	\end{align*}
	for all $j\in\{1,\dots,m\}$ and any $v$.
	For the voltage, at the boundaries $v\in\{E_{\min},E_{\max}\}$ we have
	\begin{align*}
		C\dot v = -G(x, s)\bigl(E_{\min}-E(x, s)\bigr) &\ge0,\\
		C\dot v = -G(x, s)\bigl(E_{\max}-E(x, s)\bigr) &\le0,
	\end{align*}
	for any $s \in \cS$, and $x=\col(x_1,\dots,x_m) \in \cX$. Therefore, $\cZ$ is forward invariant for \eqref{eq:exc_model}.
\end{proof}
	
	\begin{proof}[Proof of Lemma~\ref{lemma.les.attractive.es}]
		Consider system \eqref{eq.exc.compact} with $u\equiv 0$, and let $\varphi(t,z_0)$ be its solution starting from $z_0 \in \cZ$. By Assumption~\ref{ass.les.gas}, local exponential stability of $z^*$ implies the existence of $r>0$, $\lambda>0$, and $c\ge 1$ such that for all $z_0\in B_r(z^*)$, it holds that
		\begin{equation}\label{thm.eq.les}
			\|\varphi(t,z_0)-z^*\|\le c e^{-\lambda t}\|z_0-z^*\|,\ \forall t\ge 0,
		\end{equation}
		where
		\[
		B_r(z^*)\coloneq \{z\in\cZ:\|z-z^*\|<r\}.
		\]
		Since \eqref{eq.exc.compact} is time invariant, attractiveness of $z^*$ on the compact set $\cZ$ is uniform. Hence there exists $T^*>0$ such that
		\[
		\varphi(t,z_0)\in B_r(z^*),
		\quad \forall z_0\in\cZ,\ \forall t\ge T^*.
		\]
		Set
		\[
		R\coloneq \sup_{\xi\in\cZ}\|\xi-z^*\|<\infty.
		\]
		If $\|z_0-z^*\|\le r$, then \eqref{thm.eq.les} already gives
		\[
		\|\varphi(t,z_0)-z^*\|\le c e^{-\lambda t}\|z_0-z^*\|,
		\qquad \forall t\ge 0.
		\]
		
		If $\|z_0-z^*\|>r$, then for $t\in[0,T^*)$,
		\[
		\|\varphi(t,z_0)-z^*\|\le R
		\le \frac{R}{r}\|z_0-z^*\|
		\le \frac{R}{r}e^{\lambda T^*}e^{-\lambda t}\|z_0-z^*\|.
		\]
		For $t\ge T^*$, applying \eqref{thm.eq.les} from time $T^*$ yields
		\[
		\|\varphi(t,z_0)-z^*\|
		\le c e^{-\lambda (t-T^*)}\|\varphi(T^*,z_0)-z^*\|.
		\]
		Using $\|\varphi(T^*,z_0)-z^*\|\le R\le \frac{R}{r}\|z_0-z^*\|$, we get
		\[
		\|\varphi(t,z_0)-z^*\|
		\le c\frac{R}{r}e^{\lambda T^*}e^{-\lambda t}\|z_0-z^*\|.
		\]
		Therefore, with
		\[
		K\coloneq \max\left\{c,\frac{R}{r}e^{\lambda T^*},c\frac{R}{r}e^{\lambda T^*}\right\},
		\]
		we obtain, for all $z_0\in\cZ$,
		\[
		\|\varphi(t,z_0)-z^*\|\le K e^{-\lambda t}\|z_0-z^*\|,
		\qquad \forall t\ge 0.
		\]
		Thus $z^*$ is exponentially stable on $\cZ$.
	\end{proof}
	
\begin{proof}[Proof of Theorem~\ref{thm.contr.no.u}]
	Let \mbox{$\varphi(t,z_0)$} be a solution of \eqref{eq.exc.compact} with $u\equiv 0$ from $z_0\in\cZ$, and let $\D F(z)$ be its Jacobian. Local exponential stability of $z^*$ implies the existence of $P,Q\succ 0$ and $r>0$ such that
	\[
	\D F(z)^{\top}P + P \D F(z) \le -Q,\quad \forall z\in B^{\norm{\cdot}_P}_{r}(z^*),
	\]
	which implies contraction in the $P$-metric (see \cite[Thm 2.29]{pavlov_uniform_2006}). By norm equivalence, there exist $m,M>0$ such that $m\|\xi\|\le\|\xi\|_P\le M\|\xi\|$ for all $\xi$. Hence
	\[
	B_{r/M}(z^*)\subseteq B^{\|\cdot\|_P}_{r}(z^*)\subseteq B_{r/m}(z^*).
	\]
	Let $\bar{r}\coloneq r/M$. Then by converting the $P$-metric estimate to the Euclidean norm we obtain constants $\lambda>0$ and $c\ge 1$ such that
	\begin{equation}\label{thm.eq.local.contr}
		\|\varphi(t,x_0)-\varphi(t,y_0)\|\le c e^{-\lambda t}\|x_0-y_0\|,\quad \forall x_0,y_0\in B_{\bar{r}}(z^*).
	\end{equation}
	Uniform attractivity of $z^*$ on $\cZ$ yields $T^*>0$ such that $\varphi(t,z_0)\in B_{\bar r}(z^*)$ for all $t>T^*$ and all $z_0\in\cZ$. For arbitrary $x_0,y_0\in\cZ$, set $\Delta(t)\coloneq \varphi(t,x_0)-\varphi(t,y_0)$. Since $F$ is $C^1$ on $\cZ$, there exists $L\ge 0$ such that
	\[
	\|F(p)-F(q)\|\le L\|p-q\|,\quad \forall p,q\in\cZ.
	\]
	Thus, for $t\in[0,T^*]$, it holds
	\begin{align*}
		\frac{d}{dt}\tfrac12\|\Delta(t)\|^2
		&= \langle \Delta(t), F(\varphi(t,x_0)) - F(\varphi(t,y_0))\rangle\\
		& \le L\|\Delta(t)\|^2,
	\end{align*}
	hence by the \emph{Grönwall Lemma}
	\[
	\|\Delta(t)\|\le e^{Lt}\|x_0-y_0\|,\quad t\in[0,T^*].
	\]
	For $t\ge T^*$, \eqref{thm.eq.local.contr} gives
	\[
	\|\Delta(t)\|\le c e^{-\lambda(t-T^*)}\|\Delta(T^*)\|.
	\]
	Combining the two estimates, we obtain
	\begin{align*}
		\|\Delta(t)\|
		&\le c e^{-\lambda(t-T^*)} e^{L T^*}\|x_0-y_0\|\\
		&= c e^{\lambda T^*} e^{-\lambda t} e^{L T^*}\|x_0-y_0\|\\
		&= K e^{-\lambda t}\|x_0-y_0\|,
	\end{align*}
	for $t\ge 0$, and with $K\coloneq c e^{(L+\lambda)T^*}$. Then, the unforced system is IES on $\cZ$.
\end{proof}
	
\begin{proof}[Proof of Theorem~\ref{thm:IES-dwell}]
	Fix $t_2\ge t_1\ge0$ and two solutions $\varphi(\cdot,x_0,u)$, $\varphi(\cdot,y_0,u)$ under the same input $u$ given by \eqref{eq.disc.v_pre.dirac}. Let $\Sigma\subset\R{}_+$ be the set of impulse times, and list those lying in $(t_1,t_2]$ as
	\[
	\sigma_1<\dots<\sigma_N,
	\qquad
	N:=\card\{\sigma\in\Sigma:\ t_1<\sigma\le t_2\}\in\mathbb N_0.
	\]
	Define the partition
	\[
	\rho_0:=t_1,\qquad \rho_i:=\sigma_i\ (i=1,\dots,N),\qquad \rho_{N+1}:=t_2.
	\]
	Set
	\[
	\Delta(t)\coloneq \varphi(t,x_0,u)-\varphi(t,y_0,u).
	\]
	On each flow segment $[\rho_i,\rho_{i+1})$ there are no impulses, so by Theorem~\ref{thm.contr.no.u} we have
	\[
	\|\Delta(\rho_{i+1}^-)\|
	\le k\,e^{-\lambda(\rho_{i+1}-\rho_i)}\,\|\Delta(\rho_i^+)\|.
	\]
	At each impulse $\sigma_i=\rho_i$ only the synaptic variable $s$ jumps, according to
	\[
	s^+=1-(1-s^-)e^{-\beta}.
	\]
	Therefore, it holds that
	\[
	|s_1^+-s_2^+|
	=
	e^{-\beta}|s_1^--s_2^-|
	\le |s_1^--s_2^-|.
	\]
	Hence, the jump map is non-expansive, and thus
	\[
	\|\Delta(\rho_i^+)\|\le \|\Delta(\rho_i^-)\|.
	\]
	Chaining these $N$ jumps and $N+1$ flow segments yields
	\begin{align*}
		\|\Delta(t_2)\|
		=
		\|\Delta(\rho_{N+1}^-)\|
		&\le
		\Bigl(\prod_{i=0}^{N}k e^{-\lambda(\rho_{i+1}-\rho_i)}\Bigr)\|\Delta(t_1)\|\\
		&=
		k^{N+1}e^{-\lambda(t_2-t_1)}\|\Delta(t_1)\|.
	\end{align*}
	By the interval average dwell-time bound,
	\[
	N\le N_0+\frac{t_2-t_1}{\tau_a},
	\]
	hence
	\[
	k^{N+1}\le k^{N_0+1}e^{\frac{\ln k}{\tau_a}(t_2-t_1)}.
	\]
	Therefore
	\[
	\|\Delta(t_2)\|
	\le
	k^{N_0+1}
	e^{-\left(\lambda-\frac{\ln k}{\tau_a}\right)(t_2-t_1)}
	\|\Delta(t_1)\|,
	\quad \forall t_2\ge t_1\ge 0.
	\]
	If $\lambda>\frac{\ln k}{\tau_a}$, the exponential rate is positive, and the claim follows.
\end{proof}

\begin{proof}[Proof of Corollary~\ref{cor:periodic-IES}]
	Let $\Sigma=\{t_\ell=\ell T:\ell\in\mathbb N\}$. For any $t_2\ge t_1\ge0$ let
	\[
	N:=\card\{\sigma\in\Sigma: t_1<\sigma\le t_2\}.
	\]
	Since $t_{\ell+1}-t_\ell=T$, we have $N\le 1+\frac{t_2-t_1}{T}$. Hence, the average dwell-time condition of Theorem~\ref{thm:IES-dwell} holds with $\tau_a=T$ and $N_0=1$. Applying Theorem~\ref{thm:IES-dwell} gives IES provided $\lambda>\frac{\ln k}{\tau_a}=\frac{\ln k}{T}$, that is, $1/T<\lambda/\ln k$.
	\end{proof}
\begin{proof}[Proof of Lemma~\ref{lem:high-rate-synapse}]
	From the proof of Lemma~\ref{lemma.fi}, for $t\in[t_\ell,t_{\ell+1})$,
	\[
	s(t)=s(t_\ell^+)e^{-\alpha(t-t_\ell)},
	\]
	and at impulse times $t_\ell$,
	\[
	s(t_\ell^+)=1-\bigl(1-s(t_\ell^-)\bigr)e^{-\beta}.
	\]
	Since the impulses are $T$-periodic, $t_{\ell+1}-t_\ell=T$. Define
	\[
	s_\ell\coloneq s(t_\ell^+).
	\]
	Then, by the flow on $[t_\ell,t_{\ell+1})$,
	\[
	s(t_{\ell+1}^-)=e^{-\alpha T}s_\ell.
	\]
	Therefore
	\begin{align*}
		s_{\ell+1}
		&=1-\Bigl(1-s(t_{\ell+1}^-)\Bigr)e^{-\beta }\\
		&=1-\Bigl(1-e^{-\alpha T}s_\ell\Bigr)e^{-\beta }\\
		&=e^{-\beta }e^{-\alpha T}s_\ell + (1-e^{-\beta§})\\
		&=: a_T s_\ell + (1-e^{-\beta}),
	\end{align*}
	where
	\[
	a_T\coloneq e^{-\beta}e^{-\alpha T}\in(0,1).
	\]
	The affine map $s_{\ell+1}=a_T s_\ell+(1-e^{-\beta})$ is a strict contraction and admits the unique fixed point
	\[
	s_T^\star
	=
	\frac{1-e^{-\beta}}{1-a_T}
	=
	\frac{1-e^{-\beta}}{1-e^{-\beta}e^{-\alpha T}}.
	\]
	Hence $s_\ell\to s_T^\star$ exponentially for any $s_0\in[0,1]$. The corresponding $T$-periodic solution is
	\[
	s_*^{(T)}(t)=s_T^\star e^{-\alpha(t-\ell T)},
	\qquad t\in[\ell T,(\ell+1)T).
	\]
	Moreover, letting $\ell=\lfloor t/T\rfloor$, for $t\in[\ell T,(\ell+1)T)$,
	\begin{align*}
		|s(t)-s_*^{(T)}(t)|
		&=|s_\ell-s_T^\star| e^{-\alpha(t-\ell T)}\\
		&\le a_T^\ell e^{-\alpha(t-\ell T)}|s_0-s_T^\star|\\
		&\le a_T^{-1} e^{-\gamma_T t}|s_0-s_T^\star|,
	\end{align*}
	where
	\[
	\gamma_T:=-\frac{1}{T}\ln a_T>0.
	\]
	Thus $s_*^{(T)}$ is globally exponentially attractive.
	
	To prove convergence of $s_*^{(T)}$ to $1$, first note that for $t\in[\ell T,(\ell+1)T)$,
	\[
	s_*^{(T)}(t)\in[s_T^\star e^{-\alpha T},\,s_T^\star],
	\]
	so
	\[
	\sup_{t\ge 0}|s_*^{(T)}(t)-1|
	=
	\max\bigl\{1-s_T^\star,\ 1-s_T^\star e^{-\alpha T}\bigr\}.
	\]
	A direct calculation gives
	\begin{align*}
		1-s_T^\star
		&=
		\frac{e^{-\beta}(1-e^{-\alpha T})}{1-e^{-\beta}e^{-\alpha T}},\\
		1-s_T^\star e^{-\alpha T}
		&=
		\frac{1-e^{-\alpha T}}{1-e^{-\beta}e^{-\alpha T}}.
	\end{align*}
	Using $1-e^{-x}\le x$ for $x\ge 0$ and
	\[
	1-e^{-\beta}e^{-\alpha T}\ge 1-e^{-\beta},
	\]
	we obtain the uniform bound
	\begin{equation}\label{thm.unif.bound}
		\sup_{t\ge 0}|s_*^{(T)}(t)-1|
		\le
		\frac{\alpha T}{1-e^{-\beta}}
		\xrightarrow[T\to 0]{}0.
	\end{equation}
	Therefore, for every $L>0$ and every $\phi\in \mathcal L^1([0,L])$,
	\[
	\left|\int_0^L \phi(t)\bigl(s_*^{(T)}(t)-1\bigr)\,dt\right|
	\le
	\|\phi\|_{\mathcal L^1}\sup_{t\in[0,L]}|s_*^{(T)}(t)-1|,
	\]
	and hence, in view of \eqref{thm.unif.bound}, $s_*^{(T)}\to 1$ as $T\to 0$.
\end{proof}

	\bibliography{biblio}
	\bibliographystyle{IEEEtran}
	
\end{document}